\title{A logarithmic approximation algorithm for the activation edge multicover problem}
\titlerunning{A logarithmic approximation algorithm for the activation edge multicover problem}
\author{Zeev Nutov}{The Open University of Israel}{nutov@openu.ac.il}
{https://orcid.org/0000-0002-6629-3243}{}
\authorrunning{Zeev Nutov}
\begin{document}

\maketitle
\newcommand {\ignore} [1] {}

\def\sem    {\setminus}
\def\subs   {\subseteq}
\def\empt  {\emptyset}

\def\di {\displaystyle}

\newcommand{\f}   {\frac}

\def\t {\tilde}
\def\h {\hat}

\def\A {\mathbb{A}}

\def\al   {\alpha}
\def\be   {\beta}
\def\de   {\delta}
\def\eps {\epsilon}
\def\ga {\gamma}
\def\th {\theta}

\def\AA  {{\cal A}}
\def\BB  {{\cal B}}
\def\CC  {{\cal C}}
\def\FF  {{\cal F}}
\def\SS  {{\cal S}}


\def\aec    {{\sc Activation Edge-Cover}}
\def\aem   {{\sc Activation Edge-Multicover}}
\def\baem {{\sc Bipartite Activation Edge-Multicover}}

\def\kcs {{\sc $k$-Connected Subgraph}}
\def\akcs {{\sc Activation $k$-Connected Subgraph}}

\def\kos {{\sc $k$-Out-Connected Subgraph}}
\def\akos {{\sc Activation $k$-Out-Connected Subgraph}}

\keywords{edge multicover, activation problem, minimum power}

\begin{abstract}
In the {\sc Activation Edge-Multicover} problem  we are given a multigraph $G=(V,E)$ 
with activation costs $\{c_{e}^u,c_{e}^v\}$ for every edge $e=uv \in E$, 
and degree requirements $r=\{r_v:v \in V\}$.
The goal is to find an edge subset  $J \subs E$ of minimum activation cost 
$\sum_{v \in V}\max\{c_{uv}^v:uv \in J\}$,
such that every $v \in V$ has at least $r_v$ neighbors in the graph $(V,J)$.
Let $k= \max_{v \in V} r_v$ be the maximum requirement
and let $\displaystyle \theta=\max_{e=uv \in E} \f{\max\{c_e^u,c_e^v\}}{\min\{c_e^u,c_e^v\}}$ 
be the maximum quotient between the two costs of an edge.
For $\th=1$ the problem admits approximation ratio $O(\log k)$.
For $k=1$ it generalizes the {\sc Set Cover} problem (when $\theta=\infty$), 
and admits a tight approximation ratio $O(\log n)$. 
This implies approximation ratio $O(k \log n)$ for general $k$ and $\theta$,
and no better approximation ratio was known.  
We obtain the first logarithmic approximation ratio $O(\log k +\log\min\{\theta,n\})$,
that bridges between the two known ratios -- 
$O(\log k)$ for $\th=1$ and $O(\log n)$ for $k=1$. 
This implies approximation ratio 
$O\left(\log k +\log\min\{\theta,n\}\right) +\beta \cdot (\theta+1)$ 
for the {\sc Activation $k$-Connected Subgraph} problem,
where $\beta$ is the best known approximation ratio 
for the ordinary min-cost version of the problem. 
\end{abstract}

\section{Introduction} \label{s:intro}

In network design problems one seeks a cheap subgraph that satisfies a prescribed property.
A traditional setting, motivated by wired networks, is when each edge or node has a cost, 
and we want to minimize the cost of the subgraph.  
In wireless networks a communication between two nodes depends on our ''investment'' in both nodes -- 
like transmission energy and different types of equipment.
The node weighted setting captures just some of these scenarios. 
In 2011 Panigrahi \cite{P} suggested a generalization, that captures many possible wireless networks scenarios.
In Panigrahi's model, every edge $uv$ 
has an activating function $f(x_u,x_v)$ to $\{0,1\}$, such that an edge $uv$ is activated 
if and only we invest $x_u$ at node $u$ and $x_v$ at node $v$ such that $f(x_u,x_v)=1$. 
Here we use a simpler but less general setting suggested in \cite{KNS}, 
which is equivalent to that of Panigrahi for problems 
in which inclusion minimal feasible solutions have no parallel edges
(but the input graph may have parallel edges).

More formally, in {\bf activation network design problems} we are given 
an undirected (multi-)graph $G=(V,E)$ where every edge $e=uv \in E$ has two (non-negative) {\bf activation costs} $\{c_e^u,c_e^v\}$;
here $e=uv \in E$ means that the edge $e$ has ends $u,v$ and belongs to $E$. 
An edge $e=uv \in E$ is {\bf activated by a level assignment} $\{l_v: v \in V\}$ to the nodes if $l_u \geq c_e^u$ and $l_v \geq c_e^v$. 
The goal is to find a level assignment of minimum value $l(V)=\sum_{v \in V} l_v$, such that 
the activated edge set $J=\{e=uv \in E:c_e^u \leq l_u, c_e^v \leq l_v\}$ satisfies a prescribed property.
Equivalently, the minimum value level assignment that activates an edge set $J \subs E$ is given by $\ell_J(v)=\max \{c_e^v: e \in \delta_J(v)\}$;
here $\delta_J(v)$ denotes the set of edges in $J$ incident to $v$, and a maximum taken over an empty set is assumed to be zero.
We seek an edge set $J \subs E$ that satisfies the given property and minimizes $\ell_J(V)=\sum_{v \in V} \ell_J(v)$.
Note that while we use $l_v$ to denote a level assignment to a node $v$, 
we use a slightly different notation $\ell_J(v)$ for the function that evaluates the optimal 
assignment that activates a given edge set $J$. 

Let $G=(V,E)$ be a multigraph. 
Given degree requirements $r=\{r_v:v \in V\}$ we say that an edge set $J$  is an {\bf $r$-edge-cover} if 
every $v \in V$ has at least $r_v$ neighbors in the graph $(V,J)$. 
We consider the following problem.

\begin{center} \fbox{\begin{minipage}{0.98\textwidth} \noindent
\underline{\aem} \\
{\em Input:} \ \ A multigraph $G=(V,E)$ with activation costs $\{c_e^u,c_e^v\}$ for every edge $e=uv \in E$, 
\hphantom{Input:} \ \ and degree requirements $r=\{r_v:v \in V\}$. \\
{\em Output:}   An $r$-edge-cover  $J \subs E$ of minimal activation cost
$\displaystyle \ell_J(V)=\sum_{v \in V}\max_{uv \in E}c_{uv}^v$.
\end{minipage}}\end{center}

Equivalently, {\aem} can be cast as a problem of assigning {\bf levels} $\{l_v: v \in V\}$ to the nodes 
of minimum total value $l(V)=\sum_{v \in V} l_v$, such that 
the edge set $J=\{uv \in E:c_{uv}^u \leq l_u, c_{uv}^v \leq l_v\}$ activated by the assignment is an $r$-edge-cover.
The {\bf slope} $\th$ of an instance of an activation network design problem is the maximum ratio between the two costs of an edge, namely
\[
\displaystyle \th=\max_{e=uv \in E} \f{\max\{c_e^u,c_e^v\}}{\min\{c_e^u,c_e^v\}} \ .
\]
Two main types of activation costs were extensively studied in the literature. 
\begin{itemize}
\item
{\bf Node weights}.  For all $v \in V$, $c_e^v$ are identical for all edges $e$ incident to $v$. 
This is equivalent to having node weights $w_v$ for all $v \in V$ with the goal of finding a node subset
$V' \subs V$ of minimum total weight $w(V') = \sum_{v \in V'}w_v$ such that the subgraph induced by
$V'$ satisfies the given property.
Note that we may have $\th=\infty$ in this case.
\item
{\bf Power costs}: For all $e = uv \in E$, $c_e^u=c_e^v$.
This is equivalent to having ``power costs'' $c_e = c_e^u=c_e^v$ for all $e = uv \in E$. 
The goal is to find an edge subset $J \subs E$ of minimum total power 
$\sum_{v \in V} \max\{c_e:e \in \de_J(v)\}$ that satisfies the given property.
Note that this the case $\th=1$. 
\end{itemize}

Node weighted problems include many fundamental problems such as 
{\sc Set Cover}, {\sc Node-Weighted Steiner Tree}, and {\sc Connected Dominating Set} c.f. \cite{S,KR,GH}.
Min-power problems were studied already in the 90's, c.f. \cite{SRS,WNE,RM,KKKP}, followed by many more.
They were also widely studied in directed graphs, 
usually under the assumption that to activate an edge one needs to assign power only to its tail, 
while heads are assigned power zero, c.f. \cite{KKKP,N-powcov,HKMN,N-sur}. 
The undirected case has an additional requirement - we want the network to be bidirected,
to allow a bidirectional communication.

Let $k= \max_{v \in V} r_v$ denote the maximum requirement.
Kortsarz, Mirrokni, Nutov, and Tsanko \cite{KMNT} gave 
an $O(\log n)$-approximation algorithm for the min-power version, and 
Cohen \& Nutov \cite{CN} improved the approximation ratio to $O(\log k)$.
However, the node-weighted version is {\sc Set Cover} hard even for $k=1$, 
and thus has approximation threshold $\Omega(\log n)$. 
{\aem} admits an easy approximation ratio $O(\log n)$ for $k=1$ 
and this implies ratio $O(k\log n)$ for any $k$. 
Note that this approximation ratio is not even poly-logarithmic, if $k$ is large.  
We obtain the first (poly-)logarithmic approximation ratio that generalizes 
the known $O(\log k)$-approximation of \cite{CN,KMNT} for the case $\th=1$.

\begin{theorem} \label{t}
The {\aem} problem admits approximation ratio $O\left(\log k +\log\min\{\th,n\}\right)$.
\end{theorem}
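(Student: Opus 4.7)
My plan is to design a two-stage algorithm whose approximation ratios add to the claimed bound. The first stage accounts for the $O(\log k)$ term by iteratively shrinking the residual degree deficits; the second stage accounts for the $O(\log\min\{\th,n\})$ term by solving the leftover low-requirement problem via a \textsc{Set Cover}-style reduction.

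First I would reduce \aem{} to \baem{} via the standard device of orienting each edge $e=uv$ from its cheap end to its expensive end and splitting the vertices according to these two roles. This loses only a constant factor and makes the asymmetry of activation costs explicit. In the outer stage I would maintain residual requirements $r^{(i)}$ and, in each of $O(\log k)$ phases, invoke a subroutine that returns an edge set of activation cost $O(\mathrm{OPT})$ which halves every positive residual deficit; after $O(\log k)$ phases the residual instance has all $r^{(i)}_v=O(1)$ and is essentially an \aec{} instance.

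For the inner stage I would cast the residual \aec{} instance as weighted \textsc{Set Cover}: each ``set'' is a vertex $v$ paired with a discretized activation level $L_v$, covering all neighbors $u$ with $c^v_{uv}\le L_v$ at cost $L_v$. The greedy algorithm gives $O(\log n)$ immediately. To sharpen this to $O(\log\th)$ when $\th<n$, I would restrict each $L_v$ to the $O(\log\th)$ powers of two between $\min_{e\in\de(v)}c_e^v$ and $\max_{e\in\de(v)}c_e^v$, and appeal to a dual-fitting analysis that charges the greedy's cost against these $O(\log\th)$ discretized levels rather than against the ground set of vertices.

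The main obstacle will be the constant-shrink subroutine of the outer stage: its cost must be $O(\mathrm{OPT})$ with a constant independent of both $k$ and $\th$. For the power case $\th=1$ this is obtained via bipartite $b$-matching with augmenting paths, where the symmetric contribution of an edge's two endpoints makes the charging straightforward. Under activation costs the two endpoints contribute asymmetrically, so the augmenting argument needs to be adapted, most likely by routing through the LP relaxation of \baem{}: solve the LP, round on the two sides of the bipartition separately, and use the fractional LP cost as the per-phase budget; any asymmetric slack left by the rounding should be absorbable into the inner stage's $\log\min\{\th,n\}$ term. Once this subroutine is in place, summing the two stages gives the $O(\log k+\log\min\{\th,n\})$ ratio, and the approximation for \akcs{} follows by the known reduction of $k$-connectivity augmentation to edge multicover.
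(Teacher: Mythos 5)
Your high-level shape (reduce to the bipartite problem, then iterate a deficit-shrinking subroutine, then finish off a low-requirement residual instance) resembles the paper's, but both of your stages have genuine gaps, and the paper resolves them quite differently. Your outer stage hinges on a subroutine that halves every residual deficit at activation cost $O({\sf opt})$, and you yourself flag that the asymmetric costs break the augmenting-path/charging argument from the $\th=1$ case; "route through the LP relaxation and round the two sides separately" is not a proof, and it is precisely the hard part. The paper does not try to halve deficits uniformly. It introduces a weighted potential $\Phi(J)=\sum_{b}w_b r^J_b$, where $w_b$ is the $r_b$-th cheapest cost at $b$, restricts attention to ``$\tau$-cheap'' edges (so the $B$-side cost of anything selected is at most $\ga\tau$), and in each phase reduces $\Phi$ by a constant factor by solving a budgeted maximum edge-multi-coverage problem as submodular maximization subject to one partition-matroid and one knapsack constraint (via Chekuri--Vondr\'{a}k--Zenklusen), which bounds the $A$-side cost by $\tau$. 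It runs $O(\log(k\th))$ such phases -- not $O(\log k)$ -- because the potential must be driven all the way down to ${\sf opt}/\th$: the completion step just picks the $r^J_b$ cheapest remaining edges at each $b$, and its $A$-side cost is bounded by $\th\cdot\Phi(J)$, which is where the factor $\th$ is absorbed. There is no Set-Cover stage at all.

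Your inner stage has a second gap: the claim that restricting levels to $O(\log\th)$ powers of two yields an $O(\log\th)$-approximate greedy for the residual \aec{} instance is unsupported. The greedy Set Cover ratio is governed by the maximum number of elements a single (vertex, level) pair covers, which can be $\Theta(n)$ regardless of how few discretized levels each vertex has; the proposed dual fitting against ``$O(\log\th)$ levels'' does not obviously go through, and no such $O(\log\th)$ bound for \aec{} is established in the paper either. The paper obtains the $\min\{\th,n\}$ term by an entirely different mechanism: a generic preprocessing lemma that guesses the maximum cost $M$ in an optimal solution, rescales and rounds costs to integers in $\{1,\ldots,n(\rho+1)/\eps\}$, thereby forcing the slope down to $O(n\log n)$ at an additive loss of $2\eps$ in the ratio, so that $O(\log(k\th))$ on bounded-slope instances already implies $O(\log k+\log n)$ in general. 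To repair your argument you would need, at minimum, a worked-out constant-factor-per-phase subroutine under asymmetric costs and a correct device for capping $\th$ by a polynomial in $n$; the potential-function-plus-budgeted-submodular-maximization route and the cost-rescaling reduction are the two ideas your proposal is missing.
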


The proof of Theorem~\ref{t} has two main ingredients. 
\begin{itemize}
\item
We will show that it is sufficient to prove approximation ratio $O(\log k+\log \th)$, since it implies approximation ratio $O(\log n)$.
In fact, we will give a generic reduction that applies on any activation network design problem: 
achieving approximation ratio $\rho$ on instances with $\th \leq n(\rho+1)/\eps$
implies approximation ratio $\rho+2\eps$ for general instances. 
\item
We will show that the problem admits approximation ratio $O(\log k+\log \th)$. 
The algorithm extends the $O(\log k)$-approximation algorithm of \cite{KMNT,CN}
for the case $\th=1$ to the case of arbitrary $\th$. 
\end{itemize}

Our results and techniques show the advantages of studying the approximability 
of activation network design problem being parameterized by the slope, and 
suggest that some network design activation problems may be easier than they seem. 
We will illustrate this by two examples, as follows. 

A graph is {\bf $k$-outconnected from $s$} if it contains $k$ internally disjoin $sv$ paths for every node $v$.
A graph is {\bf $k$-connected} if it has at least $k+1$ nodes and contains $k$ internally-disjoint paths between every two nodes. 
In the {\kcs} problem we are given a graph $G$ with edge costs and an integer $k$, 
and seek a minimum cost $k$-connected spanning subgraph $H$ of $G$. 
In the activation version {\akcs} problem, instead of ordinary edge costs we have activation costs 
and $H$ should have minimum activation cost.
The {\kos} and {\akos} problems are defined similarly, where $H$ should be $k$-out-connected. 
It is known that if $\deg_J(v) \geq k-1$ for every node $v$, and if $F$ is an inclusion minimal 
edge set such that $J \cup F$ is $k$-connected, or is $k$-out-connected from some node, then $F$ is a forest;
for $k$-connected graph this follows from Mader's Critical Cycle Theorem \cite{Mad-und}, 
while an analogous result was proved for $k$-out-connected graphs in \cite{CJN}. 
It is known that if $F$ is a forest then $\sum_{e \in E} \ell_e(V)/\ell_F(V) \leq \th+1$ \cite[Lemma 15.1(ii) and Corollary 15.1(ii)]{N-sur}.
This implies the following.
 
\begin{lemma} \label{l:al}
Suppose that {\aem} admits approximation ratio~$\al$.
If {\kcs} admits approximation ratio $\be$, then {\akcs} admits approximation ratio 
$O\left(\log k +\log\min\{\th,n\}\right) +\be \cdot (\th+1)$,
and a similar statement holds for {\akos}.
\end{lemma}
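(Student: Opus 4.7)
The plan is a two-phase algorithm. First, I would invoke the hypothesized $\al$-approximation for {\aem} with requirements $r_v = k-1$ for every $v$. Because every $k$-connected spanning subgraph is a $(k-1)$-edge-cover, the multicover optimum is at most the activation optimum $\ell^*$ of {\akcs}, so the returned edge set $J$ will satisfy $\ell(J) \le \al\cdot\ell^*$ and $\deg_J(v) \ge k-1$ for every $v$. Second, I would define edge costs $c_e = \ell_e(V) = c_e^u + c_e^v$ for $e \in E \sem J$ and $c_e = 0$ for $e \in J$, and apply the $\be$-approximation for {\kcs} to this edge-costed instance, obtaining a $k$-connected spanning subgraph $H$ with $c(H) \le \be \cdot c^*$, where $c^*$ denotes the $c$-cost optimum. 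The algorithm outputs $J \cup H$.

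The output is $k$-connected because $H$ is. To bound its activation cost I would use $\ell(H \sem J) \le \sum_{e \in H \sem J}\ell_e(V) = c(H)$, which holds because the activation at each vertex is a maximum over incident edges and hence bounded by the sum of individual edge activations. This gives $\ell(J \cup H) \le \ell(J) + \ell(H \sem J) \le \al \cdot \ell^* + \be \cdot c^*$, and after substituting $\al = O(\log k + \log\min\{\th, n\})$ from Theorem~\ref{t} the remaining task is to prove $c^* \le (\th+1)\ell^*$.

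This is where the two facts quoted just above the lemma do the real work. I would fix an optimal {\akcs} solution $H^*$ with $\ell(H^*) = \ell^*$ and pick $F \subs H^* \sem J$ inclusion-minimal subject to $J \cup F$ being $k$-connected. Since $\deg_J(v) \ge k-1$ everywhere (arranged by the first phase), Mader's Critical Cycle Theorem guarantees that $F$ is a forest; the analogous result from \cite{CJN} handles the {\akos} case. The forest inequality $\sum_{e \in F}\ell_e(V) \le (\th+1)\ell_F(V)$ then yields $c(J \cup F) = \sum_{e \in F}\ell_e(V) \le (\th+1)\ell(F) \le (\th+1)\ell(H^*) = (\th+1)\ell^*$, and since $J \cup F$ is a feasible $k$-connected subgraph we obtain $c^* \le c(J \cup F) \le (\th+1)\ell^*$. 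The main subtlety is the coupling between the two phases: the first must produce $J$ with $\deg_J(v) \ge k-1$ so that Mader's theorem applies to the residual augmentation, and the second must be driven by edge costs proportional to $\ell_e(V)$ so that the forest inequality can convert the activation bound into a bound on $c^*$; everything else is bookkeeping.
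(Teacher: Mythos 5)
Your proposal is correct and is exactly the argument the paper intends: the paper only sketches this lemma (citing Mader's theorem, the result of \cite{CJN}, and the forest inequality $\sum_{e\in F}\ell_e(V)\le(\th+1)\ell_F(V)$ and saying ``this implies the following''), and your two-phase scheme --- first a $(k-1)$-edge-multicover via the {\aem} algorithm, then a min-cost augmentation with edge costs $c_e^u+c_e^v$ analyzed through an inclusion-minimal forest $F$ inside an optimal solution --- is the standard way those ingredients combine. The details you check (subadditivity of $\ell$, zero costs on $J$, and $\ell_F(V)\le\ell^*$) are all the bookkeeping needed.
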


Let us briefly review the approximability status of {\kos} and {\kcs}.
The directed version of {\kos} admits a polynomial time algorithm \cite{FT}, 
and this implies approximation ratio $2$ for the undirected version.
One the other hand, the current approximability status of {\sc $k$-Connected Subgraph} is somewhat more complicated;
the following bounds on $\be$ are known (see a survey in \cite{N-kcs}).
\begin{itemize}
\item
$\be=\lceil \f{k+1}{2}\rceil$ for $2 \leq k \leq 7$ \cite{KhR,ADNP,DN,KN-kcs}.
\item 
$\be=2(2+1/q)$, where $q \approx \f{1}{2}(\log_k n-1)$ is the largest integer such that $2^{q-1}k^{2q+1} \leq n$  \cite{N-kcs4,CV}.
In particular $\be=4+\epsilon$ for any constant $k$ and $\epsilon>0$. 
\item
$\be=O\left(\log k \log \f{n}{n-k}\right)$ for any $k$ \cite{N-comb}.
\end{itemize}
 
 Thus from Lemma~\ref{l:al} we get the following. 

\begin{corollary}
For $\th=O(\log n)$, 
{\akos} admits approximation ratio $O(\log n)$, and
{\akcs} admits approximation ratio $O(\log n)$ unless $k=n-o(n)$. 
\end{corollary}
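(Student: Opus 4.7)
The plan is to invoke Lemma~\ref{l:al} directly and substitute the best known approximation ratio $\beta$ for the respective min-cost counterpart, then specialize to $\theta = O(\log n)$.

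Under $\theta = O(\log n)$, the first summand $O(\log k + \log\min\{\theta,n\})$ in the lemma's bound simplifies to $O(\log n)$, since $k \leq n$ and $\log\theta = O(\log\log n)$. So in both cases it suffices to control the second summand $\beta\cdot(\theta+1) = \beta\cdot O(\log n)$.

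For \akos, I would use $\beta = 2$: the standard $2$-approximation for \kos\ follows by applying the Frank--Tardos polynomial algorithm \cite{FT} to the directed $k$-outconnected version and then ignoring orientations. This immediately gives $\beta\cdot(\theta+1) = O(\log n)$, and Lemma~\ref{l:al} yields the claimed $O(\log n)$ bound.

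For \akcs, I would split on $k$ and try to choose $\beta$ as small as possible. When $k \leq n^{1/c}$ for some constant $c$, the bound $\beta = 2(2+1/q) = O(1)$ from \cite{N-kcs4, CV} applies, so $\beta \cdot (\theta+1) = O(\log n)$. When $k$ is larger but still $k \neq n-o(n)$, the hypothesis translates into $\log\frac{n}{n-k} = O(1)$, so the bound $\beta = O\!\left(\log k \log\frac{n}{n-k}\right)$ of \cite{N-comb} stays $O(\log n)$. The main obstacle I expect is the moderately-large-$k$ regime, where neither cited bound pins down $\beta = O(1)$ on the nose; there one has to check that an appropriate combination of the available $\beta$-bounds — or a small refinement thereof — still keeps $\beta\cdot\theta$ within the target ratio, so that the exception $k = n-o(n)$ really is the only excluded regime. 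Once this is handled, the corollary is immediate from Lemma~\ref{l:al}.
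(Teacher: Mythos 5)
Your route is the same as the paper's: the corollary is given there with no proof beyond ``from Lemma~\ref{l:al} we get the following'', i.e.\ one is expected to plug the listed values of $\beta$ into the bound $O(\log k+\log\min\{\theta,n\})+\beta\cdot(\theta+1)$. Your treatment of {\akos} ($\beta=2$ via the directed algorithm of \cite{FT}) is exactly the paper's and is correct, as is the {\akcs} case for $k$ small enough that the constant-factor approximation of \cite{N-kcs4,CV} applies (roughly $k^3\le n$, so that the largest integer $q$ with $2^{q-1}k^{2q+1}\le n$ is at least $1$ and $\beta\le 6$).

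The ``obstacle'' you flag in the moderately-large-$k$ regime is genuine, and you should not expect to remove it by recombining the cited bounds: for $n^{1/3}\lesssim k$ with $n-k=\Omega(n)$ the only applicable bound is $\beta=O\left(\log k\log\frac{n}{n-k}\right)$, which is $O(\log n)$ but not $O(1)$, so $\beta\cdot(\theta+1)$ is only $O(\log^2 n)$ when $\theta=\Theta(\log n)$. The paper does not address this regime either --- its one-line derivation silently yields only $O(\log^2 n)$ there, so the {\akcs} half of the corollary as literally stated is supported by the cited results only when $\beta=O(1)$ is available (constant or polynomially bounded $k$), or under the stronger hypothesis $\theta=O(1)$, in which case the ``unless $k=n-o(n)$'' clause does exactly the work of forcing $\log\frac{n}{n-k}=O(1)$ and hence $\beta=O(\log n)$. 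In short: your proposal matches the paper's argument, fully proves the {\akos} statement, and the gap you identify is a gap in the corollary itself rather than a defect of your write-up.
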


\section{Proof of Theorem~\ref{t}} \label{s:t}

In this section we prove Theorem~\ref{t}. 
The following lemma shows that it is sufficient to prove just approximation ratio $O(\log k+\log \th)$,
since it implies approximation ratio $O(\log n)$. 

\begin{lemma} \label{l:red}
If {\aec} admits approximation ratio $O(\log (k\th))$ then it also admits approximation ratio $O(\log n)$.
\end{lemma}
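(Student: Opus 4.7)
\bigskip

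\noindent\textbf{Proof plan for Lemma~\ref{l:red}.}
The plan is to establish the more precise generic reduction hinted at in the introduction: if the problem admits ratio $\rho$ on instances with slope at most $n(\rho+1)/\eps$, then it admits ratio $\rho + 2\eps$ on arbitrary instances. Applying this with $\eps$ a small constant and $\rho=O(\log(k\th))$ (so that on a modified instance with $\th' \leq n(\rho+1)/\eps$ the assumed algorithm returns ratio $O(\log k + \log n + \log\rho) = O(\log n)$, since $k\leq n$) will yield the claim. The overall algorithm guesses a value $\tau$ close to the optimum $\tau^*$ (by trying $O(\log)$ powers of $2$ up to the total sum of edge costs, or enumerating $O(|E|)$ candidate cost values), builds a modified instance depending on $\tau$, runs the hypothesized algorithm, and returns the cheapest feasible solution produced.

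Given a guess $\tau$, define the modified instance on the same graph as follows: for each edge $e=uv$ with $c_e^{\min}=\min\{c_e^u,c_e^v\}$ and $c_e^{\max}=\max\{c_e^u,c_e^v\}$, if $c_e^{\max}/c_e^{\min} > n(\rho+1)/\eps$ (including the case $c_e^{\min}=0$), raise $c_e^{\min}$ to $c_e^{\max}\cdot \eps/(n(\rho+1))$, leaving $c_e^{\max}$ unchanged. By construction the modified instance has slope $\th' \leq n(\rho+1)/\eps$. Also, since costs only went up, any solution feasible in the modified instance is feasible in the original with no larger activation cost; hence it suffices to bound the optimum $\tau'$ of the modified instance.

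The key step is to show $\tau' \leq (1+\eps/(\rho+1))\tau^*$. Fix an original optimum $J^*$ with levels $\ell_{J^*}(v)$ summing to $\tau^*$. For each $v$, its level in the modified instance is
\[
\ell^{\mathrm{mod}}_{J^*}(v)=\max_{e=uv\in\delta_{J^*}(v)}\t c_e^v,
\]
where $\t c_e^v$ is the (possibly raised) cost. If $c_e^v$ was not raised then $\t c_e^v=c_e^v\leq \ell_{J^*}(v)$. Otherwise $c_e^v=c_e^{\min}$ and $\t c_e^v=c_e^{\max}\cdot\eps/(n(\rho+1))=c_e^u\cdot\eps/(n(\rho+1)) \leq \ell_{J^*}(u)\cdot\eps/(n(\rho+1))$. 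Thus $\ell^{\mathrm{mod}}_{J^*}(v)\leq \ell_{J^*}(v)+\eps/(n(\rho+1))\cdot \max_u\ell_{J^*}(u)$. Summing over $v\in V$ and using $\max_u\ell_{J^*}(u)\leq \tau^*$ gives $\tau'\leq \tau^*+\eps\tau^*/(\rho+1)$, as required. Running the assumed $\rho$-approximation on the modified instance yields a solution of modified cost at most $\rho(1+\eps/(\rho+1))\tau^*\leq(\rho+\eps)\tau^*$, whose original cost is no larger. Taking the minimum over the $O(\log)$ guesses of $\tau$ absorbs the guessing loss into another $\eps\tau^*$ additive term, producing the claimed $\rho+2\eps$ ratio.

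The main obstacle is the book-keeping in the middle paragraph: the raised cost $\t c_e^v$ must be charged to a \emph{different} node's level in $J^*$ (the opposite endpoint $u$), so the total level inflation scales with $n$ copies of $\max_u\ell_{J^*}(u)$ rather than with $\tau^*$ directly. The factor $\eps/(n(\rho+1))$ in the modification is chosen precisely to cancel this factor of $n$ and give an $\eps$-fraction loss. To conclude Lemma~\ref{l:red}, we apply the reduction with $\eps=1$, so an $O(\log(k\th))$-approximation on instances with $\th\leq n(\rho+1)$ becomes an $O(\log k + \log n + \log\rho)=O(\log n)$-approximation; this bound is self-consistent with $\rho=O(\log n)$ and gives the stated $O(\log n)$-approximation for arbitrary $\aem$ instances.
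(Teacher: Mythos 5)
Your proposal is correct, but it implements the generic slope-reduction behind Lemma~\ref{l:red} in a genuinely different way from the paper. The paper guesses a value $M$ with $M^* \leq M \leq 2M^*$ (where $M^*$ is the largest endpoint cost used by some optimal solution), discards edges with an endpoint cost above $M$, and replaces every cost $c$ by $\max\{\lfloor c/\al \rfloor,1\}$ with $\al=\eps M/(n(\rho+1))$; the slope bound comes from the new costs being integers in $[1, n(\rho+1)/\eps]$, and the loss comes from an additive rounding error of at most $n$ units of $\al$ per solution, i.e.\ at most $\eps M\leq 2\eps\cdot{\sf opt}$. You instead modify only the offending edges, raising the smaller endpoint cost of any edge whose ratio exceeds $n(\rho+1)/\eps$ up to an $\eps/(n(\rho+1))$ fraction of the larger one; since costs only increase, any solution transfers back to the original instance at no loss, and the increase of the optimum is controlled by charging each raised cost to the level of the \emph{opposite} endpoint in the optimal solution, giving $\bigl(1+\eps/(\rho+1)\bigr){\sf opt}$. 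Your route avoids both the guessing of $M$ and the rounding book-keeping, and in fact loses only $\eps$ rather than $2\eps$; what the paper's version buys in exchange is that the transformed costs become polynomially bounded integers, which is occasionally convenient algorithmically but is not needed for this lemma. Two small blemishes in your write-up: the guessing of $\tau$ announced in your first paragraph is never actually used (your modification does not depend on $\tau$), so the closing sentence about ``absorbing the guessing loss'' is vestigial and should be deleted; and the concluding choice $\rho=\Theta(\log n)$ deserves one explicit line verifying the fixed point, namely that $C\log\bigl(kn(\rho+1)/\eps\bigr)\leq\rho$ holds for this $\rho$ since $k\leq n$ --- this is exactly the self-consistency observation the paper also makes.
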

\begin{proof}
Let $(G=(V ,E),c,r)$ be an {\aem} instance.
Fix some optimal solution $J^*$. 
Let $M^*=\max\{\max\{c_e^u,c_e^v\}:e=uv \in J^*\}$ be the maximum $c$-cost of an end of an edge in $J^*$.
While $M^*$ is not known, it is sufficient to have some estimate $M$ for $M^*$, say $M^* \leq M \leq 2M^*$;
for that, we apply the procedure below for  every $M \in \{2^i: i=0, \ldots, \lceil \log C \rceil\}$, 
where $C=\max_{uv \in E}c_e^u$ is the maximum $c$-cost of an end of an edge in $E$, 
and return the best outcome.
So in what follows we will assume that we know an estimate $M$ for $M^*$ such that $M^* \leq M \leq 2M^*$.

Let $\rho$ be a parameter (eventually set to $\rho=O(\log n)$), let $\eps>0$ be another parameter, and let 
\[
\al=\f{\eps M}{n(\rho+1)} \ .
\]
Remove from $G$ all edges that have an end of cost greater than $M$.
Note that $J^*$ is a feasible solution of the obtained instance, since $M^* \leq M$. 
Define costs $\h{c}$ by
\[
\h{c}_{uv}^u=\max\{\lfloor c_{uv}^u/\al \rfloor,1\} \ .
\]
Let us denote by $\h{\ell}_J(v)=\max \{\h{c}_e^v: e \in \delta_J(v)\}$ the optimal 
assignment w.r.t. costs $\h{c}$ that activates a given edge set $J$. 

Let $J \subs E$ and let
\[
V_0=\{v \in V:\ell_J(v)/\al < 1\} \ \ \ \ \ V_1=\{v \in V:\ell_J(v)/\al \geq 1\} \ .
\]
Note that $\h{\ell}_J(V_0)=|V_0| \leq n$ and that $\h{\ell}_J(V_1) \leq \ell_J(V_1)/\al \leq \ell_J(V)/\al$. 
This implies 
\[
\h{\ell}_J(V) = \h{\ell}_J(V_0)+\h{\ell}_J(V_1) \leq n+\ell_J(V)/\al \ .
\]
Also note that $\ell_J(V_0)/\al \leq |V_0|$ and that $\ell_J(V_1)/\al \leq \h{\ell}_J(V_1)+|V_1|$. This implies 
\[
\ell_J(V)/\al=\ell_J(V_0)/\al+\ell_J(V_1)/\al \leq |V_0|+\h{\ell}_J(V_1) +|V_1| \leq n+\h{\ell}_J(V) \ .
\]
Summarizing, we have
\[
\ell_J(V)/\al-n \leq \h{\ell}_J(V) \leq \ell_J(V)/\al+n \ .
\]
From this we get that if $J$ is a $\rho$-approximate solution w.r.t. costs $\h{c}$ then
\[
\ell_J(V) \leq \al (\h{\ell}_J(V)+n) \leq  \al (\rho \h{\ell}_{J^*}(V)+n) \leq \al(\rho( \ell_{J^*}(V)/\al + n)+n)=\rho \ell_{J^*}(V)+\al n(\rho+1)
\]
Note that $M \leq 2\ell_{J^*}(V)$ and thus by the definition of $\al$ we have 
$\al n(\rho+1)=\eps M \leq 2 \eps \cdot \ell_{J^*}(V)$.
Consequently, 
\[
\ell_J(V) \leq \ell_{J^*}(V)(\rho+2\eps) \ .
\]

Finally, note that the maximum $\h{c}$-cost is bounded by 
$M/\al =n(\rho+1)/\eps$ while the minimum $\h{c}$-cost is at least $1$.
Thus the slope $\h{\th}$ of the obtained instance is bounded by
\[
\h{\th} \leq n(\rho+1) \ .
\]
Summarizing, the obtained instance has slope at most $n(\rho+1)$ and 
approximation ratio $\rho$ for the obtained instance implies 
approximation ratio $\rho+2\eps$ for the original instance.

Now let $\rho=O(\log n)$ and suppose that the obtained instance admits approximation ratio $O(\log (k \th))$. 
Then  the obtained instance admits also approximation ratio
\[
O(\log (k \h{\th}))=O(\log (kn (\rho+1))=O(\log (kn \log n))=O(\log n) \ .
\]
This implies approximation ratio $O(\log n)+2\eps=O(\log n)$ 
for the original instance, concluding the proof of the lemma. 
\end{proof}

\noindent
{\bf Remark.} The reduction in Lemma~\ref{l:red} applies to any activation network design problem.
The proof shows that for any$\eps>0$, if problem instances with $\th \leq n(\rho+1)/\eps$
admit approximation ratio $\rho$, then general instances admit approximation ratio $\rho+2\eps$. 

\medskip

By Lemma~\ref{l:red} it sufficient to show approximation ratio $O(\log (k\th))$. 
Following \cite{KMNT,CN}, we first reduce the problem with a loss of a factor of 
$2$ in the approximation ratio to the following particular case. 

\begin{center} \fbox{\begin{minipage}{0.98\textwidth} \noindent
\underline{\baem} \\
{\em Input:} \ A bipartite graph $G=(A \cup B,E)$ with activation costs $\{c_e^a,c_e^b\}$ for each edge 
\hphantom{Input:: }  $e=ab \in E$, and degree requirements $r=\{r_b:b \in B\}$ on $B$ only. \\
{\em Output:}   An $r$-edge-cover  $J \subs E$ of minimum activation cost.
\end{minipage}}\end{center}

The reduction is as follows. Add a copy $V'$ of $V$ and denote by $v'$ the copy of $v \in V$. 
Then replace every edge $uv$ by two edges $u'v$ and $v'u$ with activation costs as follows:
\begin{itemize} 
\item
$u'v$ has activation costs $\{c_{u'v}^{u'}=c_{uv}^u, c_{u'v}^v=c_{uv}^v\}$;
\item
$v'u$ has activation costs $\{c_{v'u}^{v'}=c_{uv}^v,c_{v'u}^u=c_{uv}^u\}$.
\end{itemize}
The degree requirements are $\{r_v:v \in V\}$, while nodes in $V'$ have no requirements.
It is not hard to see that ratio $\rho$ for the obtained {\baem} instance 
implies ratio $2\rho$ for the original instance, see \cite{KMNT,CN}. 

\medskip

So from now and on we will consider the {\baem} problem.
Whenever we consider an edge $e=ab$, it is assumed that $a \in A$ and $b \in B$. 
For $b \in B$, let $w_b$ be the $r_b$-th least cost at $b$ of an edge in $E$ incident to $b$,
where $w_b=0$ if $r_b=0$. 
The {\bf residual requirement} of $b$ w.r.t. an edge subset $J \subs E$ is defined by
\[
r^J_b= \max \{r_b-\deg_J(b), 0\} \ .
\] 
Define the following potential function on edge subsets $J \subs E$
\[ 
\Phi(J)=\sum_{b \in B} w_b  r^J_b
\]

Let ${\sf opt}$ denote the optimal solution value and let $\tau$ be an estimation for ${\sf opt}$.
The main step of the algorithm is given in the following lemma, which we will prove later. 

\begin{lemma} \label{l:1}
For any $\eps>0$ there exists a polynomial time algorithm that given 
an edge set $J \subs E$,
a parameter $\gamma>1$, 
and an integer $\tau$, 
returns an edge set $I \subs E \sem J$ such that the following holds:
\begin{enumerate}[(i)]
\item
$\ell_I(B) \leq \ga \tau$.
\item
$\ell_I(A) \leq \tau$ if $\tau \geq {\sf opt}$.
\item
$\Phi(J \cup I) \leq \al \cdot \Phi(J)$ if $\tau \geq {\sf opt}$,  
where $\al=1-\left(1-\frac{1}{\gamma}\right)\left(1-\frac{1}{e}-\eps\right)$. 
\end{enumerate}
\end{lemma}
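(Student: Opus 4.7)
The plan is to recast the single-iteration task as a constrained monotone submodular maximization problem. The objective
\[
f(I)=\Phi(J)-\Phi(J\cup I)=\sum_{b\in B} w_b\min\{\deg_I(b),r^J_b\}
\]
is monotone and submodular in $I\subs E\sem J$, and the two activation-cost inequalities $\ell_I(A)\leq\tau$ and $\ell_I(B)\leq\gamma\tau$ play the role of budget constraints. Compared to the classical submodular-knapsack setting, the non-trivial feature is that both activation costs are max-based rather than additive in the edges.

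First I would reformulate the problem around \emph{stars} at the $A$-side: the atomic selectable unit is a pair $(a,l)$ with $a\in A$ and a level $l\geq 0$ drawn from the polynomial set $\{c_e^a:e\in\de_E(a)\}$, inducing the candidate edge set $S(a,l)=\{e=ab\in\de_E(a)\sem J:c_e^a\leq l\}$. Selecting such a star contributes exactly $l$ to $\ell_I(A)$, so the $A$-budget becomes additive across disjoint stars. Since $\tau\geq{\sf opt}$, the residual optimum $J^*\sem J$ decomposes naturally into stars whose total $A$-cost and total $B$-cost are both at most $\tau$ and whose union achieves the full potential reduction $\Phi(J)$; this is the benchmark the algorithm will try to approximate.

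I would then run a Sviridenko-style greedy that, at each step, selects a star $(a,l)$ maximizing the ratio of marginal $f$-gain to its $A$-cost $l$, subject to the running $B$-cost staying below $\gamma\tau$. The $(1-1/e-\eps)$ factor in $\al$ should come from the familiar greedy-plus-partial-enumeration guarantee on the $A$-budget, and the $(1-1/\gamma)$ factor from the $B$-slack: given the benchmark star collection with $B$-cost at most $\tau$, a scaling/Markov argument shows that discarding benchmark stars whose inclusion would push the running $B$-total past $\gamma\tau$ can cost at most a $1/\gamma$ fraction of the target potential reduction, leaving a sub-benchmark of value at least $(1-1/\gamma)\Phi(J)$. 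Properties~(i) and~(ii) then follow by construction: the greedy enforces $\ell_I(B)\leq\gamma\tau$ explicitly, giving~(i) unconditionally, while~(ii) uses $\tau\geq{\sf opt}$ to guarantee the benchmark is $A$-feasible so that the greedy attains its advertised guarantee within the $A$-budget.

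The main obstacle, as I see it, is the non-modularity of the $B$-side max-cost, which is precisely what blocks a textbook application of the submodular-knapsack greedy. My plan for handling it is a ``level-commitment'' step on $B$: conceptually fix a profile $\{p_b\}_{b\in B}$ with $\sum_b p_b\leq\gamma\tau$ and restrict attention to edges with $c_e^b\leq p_b$, after which the $B$-budget is automatically respected and the star-based greedy can proceed as in a standard knapsack. Doing this in polynomial time without enumerating every profile, and certifying that the lost potential reduction is at most the claimed $1/\gamma$ fraction of $\Phi(J)$, is the step where I expect the analysis to need the most care.
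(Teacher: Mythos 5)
Your overall architecture matches the paper's: reduce to the residual instance, view the objective $\Phi(J)-\Phi(J\cup I)=\sum_{b}w_b\min\{\deg_I(b),r^J_b\}$ as a monotone submodular function, make the $A$-side cost additive by working with stars $(a,l)$ (one per center, which in the paper is enforced as a partition-matroid constraint so that the CVZ algorithm for submodular maximization under a matroid plus a knapsack constraint gives the $(1-1/e-\eps)$ factor), and charge the remaining $(1-1/\gamma)$ loss to the $B$-side budget. But there is a genuine gap at exactly the point you flag as needing the most care: you never say how to choose the $B$-side level profile $\{p_b\}$ in polynomial time, and without it neither property (i) nor the $1/\gamma$ loss bound is established. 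Your fallback — a running $B$-cost cap of $\gamma\tau$ inside the greedy with a Markov-style discarding argument — does not obviously work, because $\ell_I(B)$ is a per-node maximum rather than a sum over selected stars, so ``the running $B$-total'' is not a modular quantity you can track greedily, and discarding benchmark stars that overflow it has no clean accounting.

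The missing idea in the paper is that the profile can be fixed explicitly and proportionally: declare an edge $e$ at $b$ to be $\tau$-cheap iff $c_e^b \leq \ga\tau\cdot w_b r_b/\Phi_0$ where $\Phi_0=\Phi(J)$, and restrict the entire optimization to the cheap edge set $C$. This single choice resolves both difficulties at once. First, $\ell_C(B)\leq \ga\tau\cdot\sum_b w_br_b/\Phi_0=\ga\tau$, so \emph{every} subset of $C$ satisfies (i) unconditionally — no budget tracking on the $B$-side is needed at all. Second, for any $r$-edge-cover $F$ with $\ell_F(B)\leq\tau$, each node $b$ that has a non-cheap edge of $F$ contributes at least $\ga\tau\cdot w_br_b/\Phi_0$ to $\ell_F(B)\leq\tau$, so the total weight $\sum_b w_br_b$ over such nodes is at most $\Phi_0/\ga$; all other nodes are fully covered by $C\cap F$, whence $\Phi(C\cap F)\leq\Phi_0/\ga$. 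This is the precise form of your ``sub-benchmark of value at least $(1-1/\ga)\Phi(J)$,'' and it is what lets the remaining problem collapse to a single knapsack constraint on the $A$-side (plus the partition matroid), to which the $(1-1/e-\eps)$-approximation applies. Without this explicit proportional thresholding — or an equivalent polynomial-time profile selection, which you leave open — the proof is incomplete.
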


The next lemma bounds the activation cost of a feasible solution obtained by picking 
edges with least cost at $B$-nodes. 

\begin{lemma} \label{l:2} 
Let $J \subs E$ and let $F$ be an edge set obtained by picking for every $b \in B$
a set of $r^J_b$ edges incident to $b$ in $E \sem J$ of minimal cost at $b$. 
Then $J \cup F$ is an $r$-edge-cover and: 
\begin{enumerate}[(i)]
\item
$\ell_F(B) \leq {\sf opt}$.
\item
$\ell_F(A) \leq \th \cdot \Phi(J)$.
\end{enumerate}
\end{lemma}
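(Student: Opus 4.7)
The plan is to isolate one local observation about $F$ and derive both bounds from it in parallel. First I would note that $J \cup F$ is an $r$-edge-cover: for each $b \in B$ the construction adds $r^J_b = \max\{r_b - \deg_J(b),0\}$ further edges at $b$, so $b$ has at least $r_b$ neighbors in $J \cup F$, and nodes in $A$ carry no requirement. The central local fact I want to establish is: every edge $e \in \delta_F(b)$ satisfies $c_e^b \leq w_b$. To see this, let $E_b \subs \delta_E(b)$ be the set of $r_b$ edges of minimum cost at $b$, so that by definition $\max\{c_e^b : e \in E_b\} = w_b$. At most $\deg_J(b)$ edges of $E_b$ can lie in $J$, hence $|E_b \sem J| \geq r_b - \deg_J(b) \geq r^J_b$; thus $\delta_{E \sem J}(b)$ contains at least $r^J_b$ edges of cost at $b$ at most $w_b$, and since $F$ selects the $r^J_b$ cheapest-at-$b$ edges there, every edge of $\delta_F(b)$ has cost at $b$ at most $w_b$.

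For part (i) this local fact gives
\[
\ell_F(B) = \sum_{b \in B} \max\{c_e^b : e \in \delta_F(b)\} \leq \sum_{b \in B} w_b.
\]
To conclude I would show $\sum_{b \in B} w_b \leq {\sf opt}$: in any optimal solution $J^*$, node $b$ has at least $r_b$ neighbors and hence $|\delta_{J^*}(b)| \geq r_b$, so $\ell_{J^*}(b) = \max\{c_e^b : e \in \delta_{J^*}(b)\} \geq w_b$ by the very definition of $w_b$ as the $r_b$-th least cost at $b$ in $\delta_E(b)$. Summing and discarding the nonnegative $\ell_{J^*}(A)$ term yields $\sum_{b \in B} w_b \leq \ell_{J^*}(B) \leq {\sf opt}$.

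For part (ii) I would bound $\ell_F(A)$ crudely by replacing the $\max$ over $\delta_F(a)$ with a sum, then apply the slope bound $c_e^a \leq \th \cdot c_e^b$ edge-by-edge, and finally re-group the resulting sum by $b$-endpoint:
\[
\ell_F(A) \leq \sum_{e = ab \in F} c_e^a \leq \th \sum_{b \in B} \sum_{e \in \delta_F(b)} c_e^b \leq \th \sum_{b \in B} r^J_b \cdot w_b = \th \cdot \Phi(J),
\]
where the last inequality uses $|\delta_F(b)| = r^J_b$ together with the local fact from the first paragraph.

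No step presents a real obstacle; the only non-routine ingredient is the pigeonhole argument that the $r^J_b$ cheapest-at-$b$ edges of $\delta_{E \sem J}(b)$ still have cost at $b$ bounded by $w_b$, which is what makes both parts work. The mildly non-obvious tactical choice is estimating $\ell_F(A)$ through a sum rather than a max; this is loss-free here because on the right-hand side we re-group by $b$ and reabsorb the extra count into the factors $r^J_b$.
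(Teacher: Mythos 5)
Your proof is correct and follows essentially the same route as the paper: the key facts are that each edge of $F$ at $b$ has cost at most $w_b$ (giving $\ell_F(B)\leq\sum_b w_b\leq{\sf opt}$), and that bounding $\ell_F(A)$ by the sum $\sum_{ab\in F}c_{ab}^a$ and applying the slope inequality edge-by-edge yields $\th\cdot\Phi(J)$. The only difference is that you spell out the pigeonhole justification of $\ell_F(b)\leq w_b$ and of $\sum_b w_b\leq{\sf opt}$, which the paper states without proof.
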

\begin{proof}
Since $F$ is an $r^J$-edge cover, $J \cup F$ is an $r$-edge-cover.
By the definition of $F$, $\ell_F(b) \leq w_b$ for all $b \in B$. 
Any $r$-edge-cover has activation cost at least $\sum_{b \in B}w_b$. 
Thus we have 
\[
\ell_F(B)=
\sum_{b \in B} \ell_F(b) \leq 
\sum_{b \in B}w_b  \leq 
{\sf opt} \ .
\]

We prove part (ii). Note that:
\begin{itemize}
\item
$c_{ab}^a \leq \th c_{ab}^b$ for every $ab \in F$. 
\item
$\displaystyle \sum_{ab \in F} c_{ab}^b \leq w_b r^J_b$ for every $b \in B$.
\end{itemize}
From this we get
\[
\ell_F(A)=
\sum_{a \in A} \ell_F(a) \leq 
\sum_{ab \in F} c_{ab}^a \leq 
\sum_{ab \in F} \th c_{ab}^b =
\th \sum_{b \in B} \sum_{ab \in F} c_{ab}^b \leq 
\th \sum_{b \in B} w_b r^J_b =
\th \cdot \Phi(J) \ ,
\]
as required.
\end{proof}

Theorem \ref{t} is deduced from Lemmas \ref{l:1} and \ref{l:2} as follows.
We let $\gamma$ to be a constant strictly greater than $1$, say $\gamma=2$, and we let $\eps=1/2-2/e$.
Then $\al=3/4$. Using binary search, we find the least integer $\tau$ such that the following procedure 
computes an edge set $J$ satisfying $\Phi(J) \leq \tau/\th$.

\medskip \medskip

\begin{algorithm}[H]
\caption{} \label{alg:main}
{\bf initialization}: $J \gets \empt$                                                                                    \\
{\bf loop:}         repeat $\lceil \log_{1/\al} (k\th) \rceil$ times                                           \newline
\hphantom{ii} apply the algorithm from Lemma~\ref{l:1}                                               \newline                         
\hphantom{ii} - If $\Phi(J \cup I)> \al \cdot \Phi(J)$ 
then return ``ERROR'' and stop  \newline
\hphantom{ii} - else do $J \gets J \cup I$.  
\end{algorithm}

\medskip \medskip

After computing $J$, we compute an edge $F$ set as in Lemma~\ref{l:2},
so $J \cup F$ is a feasible solution. 


\begin{lemma} \label{l:alg1}
If $\tau \geq {\sf opt}$ then Algorithm~\ref{alg:main} returns an edge set $J$ such that $\Phi(J) \leq \tau/\th$.
\end{lemma}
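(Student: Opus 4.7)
The plan is to show two things in sequence: first, that under the hypothesis $\tau \geq \mathsf{opt}$ the \texttt{ERROR} branch is never taken, so the loop runs to completion; and second, that the potential $\Phi$ starts from a value bounded by $k\tau$ and is multiplied by $\al$ in each iteration, so after $\lceil\log_{1/\al}(k\th)\rceil$ iterations it drops below $\tau/\th$.

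The first point is immediate from Lemma~\ref{l:1}(iii): whenever $\tau \geq \mathsf{opt}$, the set $I$ returned by the subroutine satisfies $\Phi(J \cup I) \leq \al\cdot\Phi(J)$, so the algorithm always falls into the ``else'' branch and updates $J \gets J \cup I$. Thus the induction on iterations is valid.

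For the second point I would bound the starting potential $\Phi(\emptyset)$. Since $r_b^{\emptyset}=r_b \leq k$, we have
\[
\Phi(\emptyset) \;=\; \sum_{b \in B} w_b\, r_b \;\leq\; k \sum_{b \in B} w_b.
\]
Now fix any optimal solution $J^*$. For every $b \in B$, the set $J^*$ contains at least $r_b$ edges incident to $b$, so $\ell_{J^*}(b)$ is the maximum of at least $r_b$ costs at $b$; by definition of $w_b$ as the $r_b$-th least such cost, $\ell_{J^*}(b) \geq w_b$. Summing over $b \in B$ yields $\sum_{b\in B} w_b \leq \ell_{J^*}(B) \leq \mathsf{opt} \leq \tau$, and therefore $\Phi(\emptyset) \leq k\tau$.

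Combining the two observations, after the $t$-th iteration the current edge set $J$ satisfies $\Phi(J) \leq \al^t\cdot\Phi(\emptyset) \leq \al^t\cdot k\tau$. Plugging in $t=\lceil\log_{1/\al}(k\th)\rceil$ gives $\al^t \leq 1/(k\th)$ and hence $\Phi(J) \leq \tau/\th$, as required. There is no real obstacle in the argument; the only thing one has to notice is the ``right'' bound $w_b \leq \ell_{J^*}(b)$, which lets one pay the initial $\Phi(\emptyset)$ using $\mathsf{opt}$ and the factor $k$.
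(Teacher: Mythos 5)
Your proof is correct and follows essentially the same route as the paper: bound $\Phi(\emptyset)\leq k\sum_{b}w_b\leq k\cdot{\sf opt}\leq k\tau$ and then apply the per-iteration contraction factor $\al$ over $\lceil\log_{1/\al}(k\th)\rceil$ rounds. The only differences are that you spell out the justification of $\sum_{b}w_b\leq{\sf opt}$ (which the paper establishes inside Lemma~\ref{l:2}) and explicitly note that Lemma~\ref{l:1}(iii) prevents the \texttt{ERROR} branch when $\tau\geq{\sf opt}$, both of which the paper leaves implicit.
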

\begin{proof}
Note that 
\[
\Phi(\empt) = \sum_{b \in B} w_b r_b \leq  k \sum_{b \in B} w_b \leq  k \cdot {\sf opt} \ .
\]
From this we get that
\[
\di \Phi(J) \leq 
\Phi(\empt) \cdot \al^{\lceil \log_{1/\al} (k\th) \rceil} \leq 
\Phi(\empt) \cdot 1/k \th \leq 
k \cdot {\sf opt}/k\th={\sf opt}/\th \leq \tau/\th \ ,
\]
as required. 
\end{proof}

By Lemma~\ref{l:alg1}, the least integer $\tau$ for which the procedure does not return ``ERROR'' 
satisfies $\tau \leq {\sf opt}$. 
By Lemma~\ref{l:1}(i)  and since the number of iterations in Algorithm~\ref{alg:main} is $\lceil \log_{1/\al} (k \th) \rceil$, we have:
\[
\ell_J(A \cup B) \leq 
\lceil \log_{1/\al} (k \th) \rceil (1+\ga) \tau =
O(\log (k\th)) \cdot {\sf opt} \ .
\]
Also, by Lemmas \ref{l:2} and \ref{l:alg1} we have:
\[
\ell_F(A \cup B) =
\ell_F(A)+\ell_F(B) \leq 
\th \cdot \Phi(J)+{\sf opt} \leq 
\th \cdot \tau/\th+{\sf opt} \leq 
\tau+{\sf opt} \leq 
2 \cdot {\sf opt} \ .
\]
Consequently
\[
\ell_{J \cup F}(A \cup B) \leq 
\ell_J(A \cup B)+\ell_F(A \cup B) =
O(\log(k \th)) \cdot {\sf opt} +2{\sf opt} =
O(\log(k \th)) \cdot {\sf opt} \ .
\]

This concludes the proof of Theorem~\ref{t}, except that we need to prove Lemma~\ref{l:1}, which we will do in the next section. 

\section{Proof of Lemma~\ref{l:1}} \label{s:l}

It is sufficient to prove Lemma~\ref{l:1} for the residual instance with 
$E \gets E \sem J$ and $r \gets r^J$; namely, we may assume that $J = \empt$.
Let us use the notation 
$$
\Phi_0=\Phi({\empt})=\sum_{b \in B}w_b r_b \ .
$$
Then Lemma~\ref{l:1} can be restated as follows: \\
{\em For any $\eps>0$ there exists a polynomial time algorithm that given 
a parameter $\gamma>1$, 
and an integer $\tau$, 
returns an edge set $I \subs E$ such that the following holds:
\begin{enumerate}[(i)]
\item
$\ell_I(B) \leq \ga \tau$.
\item
$\ell_I(A) \leq \tau$ if $\tau \geq {\sf opt}$.
\item
$\Phi(I) \leq \al \cdot \Phi_0$ if $\tau \geq {\sf opt}$,  
where $\al=1-\left(1-\frac{1}{\gamma}\right)\left(1-\frac{1}{e}-\eps\right)$. 
\end{enumerate}
}

\begin{definition}
We say that an edge $e \in E$ incident to a node $b \in B$ is {\bf $\tau$-cheap} if 
$c_e^b \leq \ga \tau \cdot \cdot w_b r_b/\Phi_0$. Let $C$ denote the set of $\tau$-cheap edges in $E$,
namely
$$
C=\bigcup\limits_{b \in B} \left\{e \in \delta_E(b):c_e^b \leq \ga\tau \cdot \f{w_b r_b}{\Phi_0}\right\} \ .
$$ 
\end{definition}

By the definition of $\tau$-cheap edges and $\Phi_0$ we have
\[
\ell_C(B)=\sum_{b \in B}\ell_C(b) \leq 
\ga \tau \cdot \f{1}{\Phi_0} \sum_{b \in B} w_b r_b=
\ga \tau \f{1}{\Phi_0} \cdot \Phi_0 = \ga \tau \ .
\]
This implies that if we choose $I$ to be a subset of $\tau$-cheap edges, then the first condition $\ell_I(B) \leq \ga \tau$ in (i) will hold. 
The next lemma shows that if $\tau$ is not too small, then the  $\tau$-cheap edges in any feasible solution $F$  
reduce the potential by a factor of at least $1/\ga$. 

\begin{lemma} \label{l:cheap}
Let $F$ be an $r$-edge-cover and let $\tau \geq \ell_F(B)$.
Then $\Phi(C \cap F) \leq \Phi_0/\gamma$. 
\end{lemma}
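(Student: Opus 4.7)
The plan is to charge $\Phi(C\cap F)$ against $\ell_F(B)\le\tau$ node-by-node, using the auxiliary quantity $n_b:=|\delta_F(b)\setminus C|$, the number of edges of $F$ incident to $b$ that are not $\tau$-cheap, as the bridge between the two sides.

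First I would record two easy bounds on the residual requirement at each $b\in B$. Since $F$ is an $r$-edge-cover, $|\delta_F(b)|\ge r_b$, hence $|\delta_{C\cap F}(b)|\ge r_b-n_b$, which gives $r_b^{C\cap F}\le n_b$; trivially also $r_b^{C\cap F}\le r_b$. In particular every $b$ with $n_b=0$ contributes nothing to $\Phi(C\cap F)$, and every other $b$ contributes at most $w_b r_b$.

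The heart of the argument is an inequality on $\ell_F(b)$. If $n_b\ge 1$ then some edge $e\in\delta_F(b)$ fails the $\tau$-cheapness test, meaning $c_e^b>\gamma\tau\cdot w_b r_b/\Phi_0$; since $\ell_F(b)$ is a maximum over $\delta_F(b)$, this one edge already forces
\[
\ell_F(b)\;>\;\gamma\tau\cdot\frac{w_b r_b}{\Phi_0}.
\]
Summing over all $b$ with $n_b\ge 1$ and invoking the hypothesis $\tau\ge\ell_F(B)=\sum_b\ell_F(b)$ yields $\sum_{b:n_b\ge 1} w_b r_b<\Phi_0/\gamma$. Combining with the first paragraph,
\[
\Phi(C\cap F)=\sum_{b:\,n_b\ge 1}w_b\, r_b^{C\cap F}\;\le\;\sum_{b:\,n_b\ge 1}w_b r_b\;<\;\Phi_0/\gamma,
\]
which is exactly the desired bound.

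I do not expect any serious obstacle; the whole argument is careful bookkeeping. The one idea that has to be spotted is the right auxiliary variable $n_b$: it lets the max-nature of $\ell_F(b)$ absorb the entire potential deficit at $b$ into the cost of a \emph{single} non-cheap edge, so that the global hypothesis $\ell_F(B)\le\tau$ can be leveraged uniformly, irrespective of how many non-cheap edges at $b$ there actually are.
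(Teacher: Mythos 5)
Your proof is correct and is essentially the paper's own argument: your set $\{b : n_b \ge 1\}$ is exactly the paper's set $D=\{b\in B:\delta_{F\setminus C}(b)\neq\emptyset\}$, and both proofs bound $\sum_{b\in D}w_br_b$ by $\Phi_0/\gamma$ via the single-non-cheap-edge lower bound on $\ell_F(b)$ and the hypothesis $\tau\ge\ell_F(B)$. The only cosmetic difference is that you use $\ell_F(b)$ directly where the paper passes through $\ell_{F\setminus C}(b)\le\ell_F(b)$.
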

\begin{proof}
Let $D=\{b \in B: \de_{F \sem C}(b) \neq \empt\}$. 
Since for every $b \in D$ there is an edge $e \in F \sem C$ incident to $b$ 
with $c^b_e > \f{\tau\ga}{\Phi_0} \cdot w_b r_b$,
we have $\ell_{F \sem C}(b) \geq \f{\tau \ga}{\Phi_0} \cdot w_b r_b$ for every $b \in D$.
Thus
\[
\tau \geq \ell_F(B) \geq 
\ell_{F \sem C}(B) = 
\sum_{b \in D} \ell_{F \sem C}(b) \geq 
\ga \tau \cdot \f{1}{\Phi_0} \sum_{b \in D} w_b r_b  \ .
\]
This implies that $\sum_{b \in D} w_b r_b \leq \Phi_0/\ga$. 
Note that for every $b \in B \sem D$,
$\de_F(b) \subs \de_C(b)$ and hence $r^{C \cap F}(b)=r^F(b)=0$.
Thus we get
\[
\Phi(C \cap F) = 
\sum_{b \in B} w_b r^{C \cap F}_b= 
\sum_{b \in D} w_b r^{C \cap F}_b \leq 
\sum_{b \in D} w_b r_b \leq \Phi_0/\ga \ .
\] 
This concludes the proof of the lemma.
\end{proof}

Let now $F^*$ be an optimal $r$-edge-cover and let $I^*=C \cap F^*$. 
Since $\ell_C(B) \leq \ga \tau$ (this is so for any $\tau$) and by Lemma~\ref{l:cheap} we have:
\begin{enumerate}[(i')]
\item
$\ell_{I^*}(B) \leq \ga \tau$.
\item
$\ell_{I^*}(A) \leq \tau$ if $\tau \geq {\sf opt}$.
\item
$\Phi(I^*) \leq \Phi_0/\ga$ if $\tau \geq {\sf opt}$. 
\end{enumerate}
This shows that there exists a ``good'' set of $\tau$-cheap edges, 
that satisfies property (iii) with a constant $1/\ga$ that is smaller than 
$\al=1-\left(1-\frac{1}{\gamma}\right)\left(1-\frac{1}{e}-\eps\right)$ in (iii). 
Unfortunately, we are not able to find such $I^*$ in polynomial time. 
However, we can find an approximate solution $I$, that by ``budget'' $\tau$ at $A$ still reduces the 
potential by a constant factor.  
The problem we need to solve is as follows. 

\medskip

\begin{center} \fbox{\begin{minipage}{0.960\textwidth}\noindent
\underline{\sc Bipartite Activation-Budgeted Maximum Edge-Multi-Coverage} \\
{\em Input:} \ \ A bipartite graph $G=(A \cup B,E)$ with edge-costs $\{c_e:e \in E\}$ and node-weights \hphantom{\em Input:} \ \
$\{w_v:v \in B\}$,  degree bounds $\{r(v):v \in B\}$, and a  budget $\tau$.\\
{\em Output:} Find $I \subseteq E$ with $\ell_I(A) \leq \tau$ that maximizes $\displaystyle \sum_{b \in B} w_b \cdot \min\{\deg_I(b),r_b\}$.
\end{minipage}}\end{center}

\begin{lemma}
{\sc Bipartite Activation-Budgeted Maximum Edge-Multi-Coverage} 
admits a $(1-1/e-\eps)$-approximation algorithm.
\end{lemma}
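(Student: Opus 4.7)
My plan is to cast this problem as monotone submodular maximization subject to a single knapsack constraint, and then invoke a known $(1-1/e-\eps)$-approximation algorithm for that well-studied setting.

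I reformulate the problem on a new ground set. For each $a \in A$, sort the edges incident to $a$ in non-decreasing order of $c^a$-cost as $e_1^a, \ldots, e_{d_a}^a$, and write $c_j^a := c^a_{e_j^a}$. Let $U = \{(a,j) : a \in A, 1 \le j \le d_a\}$, give the element $(a,j)$ the additive cost $c_j^a$, and for $S \subs U$ set
\[
I(S) = \bigcup_{(a,j) \in S} \{e_1^a, \ldots, e_j^a\}, \qquad F(S) = \sum_{b \in B} w_b \min\{\deg_{I(S)}(b), r_b\} \ .
\]
The knapsack constraint is $\sum_{(a,j) \in S} c_j^a \le \tau$. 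Selecting $(a,j)$ corresponds to ``raise the level at $a$ to $c_j^a$'', which activates the $j$ cheapest edges at $a$ simultaneously.

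Next I verify that $F$ is monotone submodular on $2^U$. Each summand $w_b \min\{\deg_{I(S)}(b), r_b\}$ is a weighted truncated coverage function on $U$: the element $(a,j)$ ``covers'' the neighbor $a$ of $b$ precisely when $j \ge j_{ab}$ (with $j_{ab}$ the index of $ab$ among the sorted edges at $a$), and the $\min\{\cdot, r_b\}$ truncation is a non-decreasing concave function of the number of distinct covered neighbors, so each summand is monotone submodular; a non-negative sum of monotone submodular functions is monotone submodular. Having set up the objective, I match up the optima. Given any feasible edge set $I^* \subs E$ with $\ell_{I^*}(A) \le \tau$, let $j_a^*$ be the largest index of an edge of $I^*$ at $a$ and put $S^* = \{(a,j_a^*) : j_a^* \ge 1\}$. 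Then $I(S^*) \supseteq I^*$ (the extra edges come along for free, since they are cheaper at $a$ than the level already paid for), so $F(S^*)$ dominates the objective value of $I^*$, and $\sum_{(a,j) \in S^*} c_j^a = \ell_{I^*}(A) \le \tau$. Conversely, any knapsack-feasible $S$ yields $I(S) \subs E$ with $\ell_{I(S)}(A) \le \sum_{(a,j) \in S} c_j^a \le \tau$.

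Finally, I plug in the $(1-1/e-\eps)$-approximation algorithm for monotone submodular maximization under a knapsack constraint (e.g.\ Sviridenko's algorithm with partial enumeration of the most expensive items in the optimum) to compute $S$ with $F(S) \ge (1-1/e-\eps)\, F(S^*)$, and output the edge set $I(S)$. The main technical obstacle is getting the reformulation right: the ground set, cost, and objective must fit together so that the objective really is monotone submodular, the additive knapsack cost dominates the activation cost $\ell_{\cdot}(A)$, and no value is lost when lifting $I^*$ to $S^*$. Once this bookkeeping is in place the approximation guarantee follows from a black-box invocation, with no further algorithmic work required.
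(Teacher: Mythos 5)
Your proof is correct, but it takes a genuinely different route from the paper. The paper's ground set is the collection $\AA$ of \emph{stars} centered at $A$-nodes, with $\ell(S)=\max_{ab\in S}c^a_{ab}$ as the knapsack cost of a star; to prevent double-counting the activation cost at a center it must additionally impose a partition-matroid constraint (at most one star per center), and it then invokes the Chekuri--Vondr\'{a}k--Zenklusen $(1-1/e-\eps)$-approximation for monotone submodular maximization subject to one matroid \emph{and} one knapsack constraint. You instead take as ground set the threshold elements $(a,j)$, i.e.\ exactly the \emph{prefix} stars at each center, with additive cost $c^a_j$; this makes the matroid constraint unnecessary (picking two prefixes at the same center only wastes budget, since the knapsack sum upper-bounds the true activation cost $\max_j c^a_j$, while the optimum lifts with \emph{equality} via the single largest index $j^*_a$ per center), so you land in the classical single-knapsack setting and can use Sviridenko's algorithm, which even gives the clean ratio $1-1/e$ without the $\eps$. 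Your reduction is arguably tidier on two counts: it uses lighter machinery, and it sidesteps the fact that the paper's ground set of \emph{all} stars is exponentially large (the paper implicitly relies on the value-oracle model or on restricting to prefix stars, which is precisely what you make explicit). Your verification of monotone submodularity (coverage function composed with the concave truncation $\min\{\cdot,r_b\}$) and the two-directional matching of feasible solutions are both sound.
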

\begin{proof}
We will show that the problem can be cast as the one of maximizing a submodular function subject to 
one matroid constraint and one knapsack constraint, 
that in the value oracle admits a $(1-1/e-\eps)$-approximation algorithm \cite{CVZ}.

Let $\AA$ be the set of stars with center in $A$. 
For $S \in \AA$ with center $a$ let $\ell(S)=\max_{ab \in S}c_{ab}^a$ be the activation cost at $a$ of $S$. 
For $\SS \subs \AA$ and $b \in B$ let $\deg_\SS(b)$ denote the degree of $b$ in the union of the stars in $\SS$. 
Let
\[
f(\SS) = \sum_{b \in B} w_b \cdot \min\{\deg_{\SS}(b),r_b\}
\]
Consider some inclusion minimal solution $I$ to the problem.
Then $I$ can be partitioned into a collection $\SS(I)$ of stars with centers in $A$,
where no two stars have a node in $A$ in common. 
Thus the problem can be cast as maximizing $f(\SS)$ subject to two constraints:
\begin{enumerate}[(i)]
\item
No two stars have a common center.
\item
$\sum_{S \in \SS} \ell(S) \leq \tau$. 
\end{enumerate}

Note that the function $f(\SS)$ is submodular; this is so since for every $b \in B$,
$\deg_{\SS}(b)$ is submodular and thus $\min\{\deg_{\SS}(b),r_b\}$ is submodular.
Since non-negative linear combination of submodualar functions is also submodular,
we get that $f(\SS)$ is submodular. 
 
The constraints in (i) are matroid constraints, of the partition matroid,
where for every $a \in A$ we have a part of the stars with center $a$. 
The constraints in (ii) are knapsack constraints.
This concludes the proof of the lemma. 
\end{proof}

The following algorithm computes an edge set as in Lemma~\ref{l:1}. 
\begin{enumerate}
\item
Among the $\tau$-cheap edges, compute a $\left(1-\frac{1}{e}-\eps\right)$-approximate
solution $I$ to {\sc Bipartite Activation-Budgeted Maximum Edge-Multi-Coverage}.
\item
If $\Phi(I) \leq \al \Phi_0$ then return $I$, where 
$\al =1-\left(1-\frac{1}{\gamma}\right)\left(1-\frac{1}{e}-\eps\right)$; \\
Else declare ``$\tau<{\sf opt}$''.
\end{enumerate}

\vspace{0.2cm}

We have  $\ell_I(A) \leq \tau$ and  $\ell_I(B) \leq \gamma \tau$.
Now we show that if $\tau \geq {\sf opt}$ then $\Phi(I) \leq \al \Phi_0 $.
Let $F^*$ be some optimal solution. 
Then $\ell_{C \cap F^*}(A) \leq {\sf opt} \leq \tau$. 
By Lemma~\ref{l:cheap} $\Phi(C \cap F^*) \leq \Phi_0/\gamma$, namely, 
$C \cap F^*$ reduces $\Phi$ by at least $\Phi_0\left(1-\frac{1}{\gamma}\right)$.
Hence the $\left(1-\frac{1}{e} -\eps \right)$-approximate solution $I$ to 
{\sc Bipartite Activation-Budgeted Maximum Edge-Multi-Coverage}
reduces $\Phi_0$ by at least $\Phi_0\left(1-\frac{1}{e}-\eps\right)\left(1-\frac{1}{\gamma}\right)$.
Consequently, 
\[
\Phi(I) \leq \Phi_0-\Phi_0\left(1-\f{1}{e}-\eps\right)\left(1-\frac{1}{\gamma}\right)=\al \Phi_0 \ ,
\] 
as claimed.

\medskip

This concludes the proof of Lemma~\ref{l:1}, and thus also the proof of Theorem~\ref{t} is complete. 


\end{document}